\newtheorem{theorem}{Theorem}
\newtheorem{definition}{Definition}
\newtheorem{lemma}{Lemma}
\newtheorem{drule}{Rule}
\definecolor{SHCcolor}{rgb}{0.9,0.1,0.1}
\DeclareMathOperator{\area}{area}
\title{Properties of the contraction map for holographic entanglement entropy inequalities}
\author[a,b]{Ning Bao,}
\author[a,c]{and Joydeep Naskar}
\affiliation[a]{Department of Physics, Northeastern University, Boston, MA 02115, U.S.A.}
\affiliation[b]{Computational Science Initiative, Brookhaven
National Laboratory, Upton, NY 11973 U.S.A.}
\affiliation[c]{The NSF AI Institute for Artificial Intelligence and Fundamental Interactions, Cambridge, MA, U.S.A.}
\emailAdd{ningbao75@gmail.com}
\emailAdd{naskar.j@northeastern.edu}
\abstract{We present a deterministic way of finding contraction maps for candidate holographic entanglement entropy inequalities modulo choices due to actual degeneracy. We characterize its complexity and give an argument for the completeness of the contraction map proof method as a necessary and sufficient condition for the validity of an entropy inequality for holographic entanglement.}
\def\@fpheader{~}\makeatother
\begin{document}
\maketitle
\flushbottom

\section{Introduction}
Holographic entropy inequalities constrain the allowed quantum states in conformal field theory that are consistent with the existence of a semi-classical gravity dual via the Ryu-Takayanagi formula \cite{Ryu_2006}. These constraints on entanglement are stronger than those obeyed by generic quantum states, as first discovered in \cite{Hayden_2013}. The program of characterizing them systematically was initiated in \cite{Bao_2015} and continued in \cite{Hern_ndez_Cuenca_2019, Avis_2023}, where in particular the classification of the set of entropic constraints was completed for mixed states on five parties. Further work towards characterizing holographic entanglement includes the study of bit threads \cite{Freedman_2016, Headrick_2018, Hubeny:2018bri, Cui_2019, Harper_2019, headrick2022covariant}, the connection to quantum marginal independence problems \cite{Hernandez-Cuenca:2019jpv,Hernandez-Cuenca:2022pst,He:2022bmi,He:2023cco}, generalizations to other measures of holographic entanglement \cite{Umemoto_2018, Nguyen_2018, bao2018holographic, bao2019conditional, dutta2019canonical}, and extensions of the tools and proof techniques beyond the holographic setting \cite{Bao_2020, bao2021gap, Walter_2021, Bao_2022}.

An essential tool introduced by \cite{Bao_2015} is a combinatorial strategy for proving the validity of holographic entropy inequalities, dubbed the contraction map proof method.\footnote{An alternative method based on the bit-thread formulation of \cite{Freedman_2016} may offer a conceptually clearer interpretation of the underlying meaning of the inequalities. Unfortunately, as currently understood, its scope of applicability is very limited \cite{Headrick:2020gyq}.}\footnote{Also, by appealing to a related object, the holographic cone of averaged entropies \cite{czech2022holographic,Fadel:2021urx}, a seven-party inequality was found in \cite{Czech_2023}.} The method states that an inequality is valid if there exists a certain map obeying a contraction property dictated by the inequality (this will be discussed in more detail in Section \ref{sec:method}). Though powerfully general, its inner workings are poorly understood, resulting in highly inefficient implementations of it which prove to be onerous, when not practically impossible to run, for inequalities involving more than five parties.
Up until now, the method employed to discover these contraction maps has been based on greedy algorithms, heavily reliant on the aid of heuristics, that brute-force search for global solutions to the contraction property.\footnote{Such an approach is prone to require backtracking, and indeed is the naive approach to solve the worst-case NP-complete constraint satisfaction problem, of which this problem is an instance. The central thesis of this work is that this particular instance of the constraint satisfaction problem is, in fact, not a worst-case instance, and thus admits computational speedup.} This is highly unsatisfactory not just at the practical level, but also at the fundamental one: the contraction condition is intrinsically geometric, and a better understanding of it should not only improve the proof method but also our holographic interpretation of these inequalities.

In this work, we develop a new deterministic technique for finding contraction maps for candidate inequalities. In particular, we find and prove a set of rules that provide unique solutions to the contraction condition that determines the map, thereby avoiding the arbitrary local choices that greedy approaches suffer from. We empirically demonstrate that our set of rules is strong enough that no backtracking is needed on any of the heretofore-proven inequalities for five, six and seven parties. We characterize the algorithmic complexity of implementing these deterministic rules and provide an argument that the contraction map proof method is a complete proof technique. In so doing, we show that this deterministic implementation of the contraction map proof technique provides a notable improvement both performance-wise and conceptually to the existing greedy approaches. In addition to the exponential speed-up, our algorithm can easily handle inequalities that are computationally intractable by the state-of-the-art greedy algorithm.

The organization of this paper is as follows. In Section \ref{sec:method}, we review relevant aspects of the contraction map proof method as applied to holographic entanglement entropy. In Section \ref{sec:rules}, we introduce the new technical notions needed for the new deterministic techniques, state those techniques, and prove that they must be respected by all contraction maps. In Section \ref{sec:complexity}, we characterize the algorithmic complexity of implementing these techniques and give an argument for the completeness of the contraction map proof method in general. Finally, in Section \ref{sec:discussion} we conclude with a discussion of our results and future directions. 
%\sout{Sprinkled throughout the work be miscellaneous results on contraction maps that were discovered during the completion of this work that proved unnecessary to its completion, presented as asides when appropriate.}

\section{Contraction Map Proofs for Holographic Entanglement Entropy}
\label{sec:method}
We first briefly introduce the basic ingredients and notation necessary to state the contraction map proof method for holographic entanglement entropy inequalities (see \cite{Bao_2015, Akers:2021lms, Avis_2023} for more details and intuition).
In holography, the entropy $S(X)$ of a boundary region $X$ is given by 
\begin{equation}
    S(X)=\frac{\area \mathcal{X}}{4G_N},
\end{equation}
where $\mathcal{X}$ is the Ryu-Takayanagi (RT) surface for $X$, and $G_N$ is Newton's constant \cite{Ryu_2006}.\footnote{Here we work only with the geometric contribution that gives the von Neumann entropy of boundary regions to leading order in $G_N$, and assume our boundary regions lie at moments of time symmetry such that RT applies.} As usual, the RT surface obeys a homology condition by which there exists a homology surface $W_X$ whose boundary $\partial W_X = X\cup\mathcal{X}$.
Given a set of $n\ge1$ boundary subsystems $\{X_i\}_{i=1}^n$, one can consider $2^n-1$ distinct subsystems $X_I\equiv \bigcup_{i\in I} X_i$, one for each non-empty subset $I\subseteq[n]\equiv\{1,\dots,n\}$. Using the shorthand $S(I)\equiv S(X_I)$, entropy inequalities can then be canonically written in the form
\begin{equation}
\label{eq:ineq}
    \sum_{l=1}^M \alpha_l S({I_l}) \geq \sum_{r=1}^N \beta_r S({J_r}),
\end{equation}
where $\alpha_l, \beta_r\ge1$ by convention, and the $I_l, J_r$ subsets are all distinct.

One can map the geometric picture of the Cauchy slice of bulk geometry to a graph picture and vice-versa where each distinct subsystem in the partitioned geometry is assigned to a vertex on the graph. An entanglement entropy inequality that holds on the graph also holds for the holographic geometry. From here on, we will refer to such inequalities as \emph{entropy inequalities on graphs}. The problem of proving an entropy inequality candidate for a holographic geometry can thus be translated into the problem of finding a contraction map from a hypercube representing the left-hand-side(LHS) to a hypercube representing the right-hand-side(RHS) of an inequality.

For the contraction map proof method, we consider the RT surfaces on the LHS, $\{\mathcal{X}_{I_l}\}_{l=1}^N$ and use them to partition the bulk Cauchy slice where they lie into bulk regions. The resulting bulk regions $W_x$ can be uniquely labelled by bitstrings $x\in\{0,1\}^N$ via the following inclusion/exclusion scheme:
\begin{equation}
    x_l = 
    \begin{cases}
        1, \qquad \text{if the region lies on the homology surface of $X_{I_l}$},\\
        0, \qquad \text{otherwise.}
    \end{cases}
\end{equation}
A particularly relevant set of bitstrings are those labeling regions adjacent to boundary subsystems; for each $i\in[n+1]$,
\begin{equation}
    x^{(i)}_l = \begin{cases}
        1, \qquad \text{if} \quad i\in I_l \\
        0, \qquad \text{otherwise,}
    \end{cases}
\end{equation}
where the ($n+1$)-th bitstring is all zeroes and is associated with the purifier. These bitstrings are often referred to as occurrence bitstrings and can be analogously defined for RHS subsystems using bitstrings $y\in\{0,1\}^M$.
Bitstrings differing by a single bit label adjacent bulk regions sharing a portion of RT surface. This way, this labeling not only encodes all regions that make up homology surfaces but also all portions that make up their RT surfaces. The idea of the contraction map proof method is to build a map that takes the LHS bulk regions $W_x$ and uses them to construct homology regions for RHS subsystems. The purpose of the contraction property is to make sure that no portion of RT surface of any LHS subsystem appears on the boundary of the newly constructed RHS homology regions more times than on the LHS ones. This guarantees that the areas bounding the resulting RHS homology regions are no larger than those of LHS RT surfaces. These homology regions for RHS subsystems will be bounded by bulk surfaces of not-necessarily-minimal area, so it will also be guaranteed that actual RT surfaces for RHS subsystems will have no larger area. The upshot is that if it is possible to find such a contraction map, then the inequality is valid for any possible holographic configuration. Explicitly, defining a weighted Hamming distance on length-$M$ bitstrings via
\begin{equation}
\label{eq:weightedham}
    d_\gamma(x,x') \equiv \sum_{i=1}^M \gamma_i \, |{x_i - x'_i}|,
\end{equation}
the contraction map proof method can be stated as follows:
\begin{theorem}
\label{thm:contraction}
    If there exists a map $f:\{0,1\}^M\to\{0,1\}^N$ with the homology property
    \begin{equation}
    \label{eq:homocond}
        f(x^{(i)})_r = 1 \qquad \text{iff} \quad i\in J_r, \qquad \forall \, i \in[n+1], \, r\in[N],
    \end{equation}   
    such that the following contraction conditions are obeyed,
    \begin{equation}
    \label{eq:contcond}
        d_\alpha(x,x') \geq d_\beta(f(x),f(x')) \qquad \forall \, x, x'\in\{0,1\}^M,
    \end{equation}
    then \eqref{eq:ineq} is a valid entropy inequality on graphs.
\end{theorem}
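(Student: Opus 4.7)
The plan is to take an arbitrary holographic configuration (or equivalently a graph model) and use the hypothesized map $f$ to fabricate, from the LHS data, candidate homology regions for the RHS subsystems whose total $\beta$-weighted boundary area is bounded above by the LHS $\alpha$-weighted RT area. Concretely, with the LHS RT surfaces $\{\mathcal{X}_{I_l}\}_{l=1}^M$ inducing the partition $\{W_x\}_{x\in\{0,1\}^M}$ of the bulk Cauchy slice, I would define, for each $r \in [N]$,
\begin{equation}
    \widetilde{W}_r \;=\; \bigcup_{\substack{x\in\{0,1\}^M \\ f(x)_r = 1}} W_x .
\end{equation}
The conformal-boundary portion of $\partial \widetilde{W}_r$ is $\bigcup_{i:\, f(x^{(i)})_r=1} X_i$, which by the homology condition \eqref{eq:homocond} equals $X_{J_r}$ exactly (the purifier being correctly excluded by the all-zero convention for $x^{(n+1)}$). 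Hence each $\widetilde{W}_r$ is a legitimate homology region for $X_{J_r}$, and its bulk boundary is a trial homology surface.

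For the area estimate I would exploit the fact that two regions $W_x, W_{x'}$ share a positive-measure codimension-one bulk facet only when $x, x'$ differ in a single bit $l$, in which case that facet lies on $\mathcal{X}_{I_l}$. A given such facet enters the bulk boundary of $\widetilde{W}_r$ precisely when $f(x)_r \neq f(x')_r$, so after summing over $r$ with weights $\beta_r$ it is counted $d_\beta(f(x), f(x'))$ times, while on the LHS it is counted $\alpha_l = d_\alpha(x,x')$ times. The contraction condition \eqref{eq:contcond} then gives, facet by facet,
\begin{equation}
    \sum_{r=1}^N \beta_r \, \area \partial \widetilde{W}_r \;\leq\; \sum_{l=1}^M \alpha_l \, \area \mathcal{X}_{I_l},
\end{equation}
where only the bulk portions of the $\partial \widetilde{W}_r$ contribute to the variable right-hand side (the conformal-boundary pieces are subsystem areas that cancel the usual way under the RT-formula identification). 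Combining this with $\area \mathcal{X}_{J_r} \leq \area \partial \widetilde{W}_r$, which follows from minimality of the true RT surface, and dividing by $4 G_N$ reproduces \eqref{eq:ineq}.

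The main obstacle I anticipate is the facet-level bookkeeping. One has to verify that only single-bit-flip neighbors contribute positive-measure shared facets (so that the $d_\alpha$-weighted LHS really equals $\sum_l \alpha_l \area \mathcal{X}_{I_l}$ rather than merely bounding it), and that the boundary $X_i$ facets are assigned consistently on both sides of the ledger, without spurious over- or under-counting. The global form of \eqref{eq:contcond}, imposed on all pairs rather than only on single-bit-flip neighbors, is stronger than a local telescoping argument strictly needs; matching this combinatorial hypothesis to the geometric accounting sketched above is the principal non-routine step.
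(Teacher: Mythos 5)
Your proposal is correct and follows essentially the same route the paper (and its source, Bao et al.) uses: build the trial RHS homology regions $\widetilde{W}_r=\bigcup_{f(x)_r=1}W_x$, use the homology property to anchor them to $X_{J_r}$, count each shared facet between single-bit-flip neighbors with weight $d_\beta(f(x),f(x'))\le d_\alpha(x,x')=\alpha_l$, and finish with RT minimality. Your closing observation is also consistent with the paper's framing: the edge-level contraction condition is all the telescoping needs, and the all-pairs version is equivalent to it via the triangle inequality, so nothing is lost or gained.
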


As explained above, $f$ builds homology regions for RHS subsystems using regions of those for LHS ones; in particular, the homology region for $J_r$ is given by $\bigcup_{f(x)_r = 1} W_x$, i.e., the union of all LHS regions labeled by an $x$ such that $f(x)_r = 1$. The homology property \eqref{eq:homocond} on occurrence bitstrings thus simply makes sure that the constructed homology regions are adjacent to the desired RHS subsystem. This homology property can essentially be understood as providing the initial conditions for the problem of finding a contraction map. The challenge of obeying all the contraction conditions set by \eqref{eq:contcond} is where the complexity of the proof method lies.

To study the contraction maps relevant to Theorem \ref{thm:contraction}, it is useful to think of their domain and co-domain spaces as unit hypercubes $H_M\equiv\{0,1\}^M$ and $H_N\equiv\{0,1\}^N$ respectively, where every bitstring labels a corresponding vertex. In such hypercubes, Hamming distances between any pair of vertices are given by the minimal distance between them following edges of the hypercube (cf. using the taxicab metric), which is also equivalent to just counting the number of different bits between the bitstrings labeling the vertices. The weighted Hamming distance introduced in \eqref{eq:weightedham} that is relevant for the contraction condition in \eqref{eq:contcond} simply scales each dimension of the hypercube by a multiplicative factor\footnote{It should be noted that this weighted Hamming distance should only be used for the left-hand side hypercube of a candidate inequality. For the right-hand side, terms with an integer coefficient greater than one must be repeated that number of times for the contraction proof method to work.}. Because bitstrings uniquely label hypercube vertices, we will refer to the two interchangeably in what follows.~\footnote{For holographers, the sub-graph of the RHS hypercube that is the image of the LHS hypercube (modulo the unphysical vertices noted above) forms a desiccation of a multi-boundary wormhole geometry, where the boundary regions are those labeled by the single characters. This graph is built such that each vertex is a bulk region, and each edge is a wormhole connecting adjacent bulk regions corresponding to vertices with Hamming distance one. This connection will not be essential to understanding the present work but is given here to make the work more self-contained.}

The above constraints are the only ones that must be satisfied for a contraction map to exist for a given holographic entanglement entropy inequality.
The traditional way of finding such contraction maps is via a greedy algorithm, wherein the map is built recursively through locally optimal choices of image bitstrings. In particular, one initially fixes some ordering of the domain bitstrings, with occurrence bitstrings first as they provide a set of initial conditions for the contraction problem.\footnote{Subsequent input bitstrings may be ordered canonically e.g. as binary numbers, or by minimizing Hamming distances with the occurrence bitstrings to produce the tightest contraction conditions first. The latter method turns out to be crucial in making the greedy algorithm succeed.} One then attempts to find an image bitstring for the next input bitstring that obeys all contraction conditions with the previous ones. If more than one solution exists, one of them is picked at random and the rest are stored. This step is iterated with subsequent input bitstrings until either a full contraction map is found, or one hits an input for which no output satisfies the contraction conditions. The latter generically happens (even if a contraction map does exist) because solutions are only locally obtained following some ordering which accounts for previous constraints, but not future ones and random choices are made whenever more than one option is available. As a result, such a failure is only local and simply requires revisiting previous solution choices and re-iterating the process for every such choice made. A definitive failure to find a contraction map, with which one can conclude no such map exists, occurs when all possible solution choices for all input bitstrings have been exhausted with local failures in all cases. Otherwise, a contraction map will always be found.

This potential for assigning bitstrings incorrectly is the main downside of the greedy algorithm; it is assigning bits at times where it is not clear if those bits are free to be assigned to be 1 or 0. As such, it oftentimes requires backtracking to previous solutions, where incorrectly assigned bits need to be flipped. Finding which bits are incorrectly assigned is an algorithmically time-consuming process, which only gets worse as the inequality gets more terms. In fact, as the greedy algorithm does not exploit known structures of holographic entanglement entropies, its algorithmic complexity should be similar to that of $3-SAT$, as it is approaching the problem as if it were an unstructured constraint satisfaction problem. As such, it would be a worst-case NP-complete algorithm. This would clearly be too slow to scale beyond a relatively small $N$ and $M$.

\section{Deterministic Approach to Contraction Maps}
\label{sec:rules}

We will bypass the need for a greedy algorithm by directly solving deterministically for image bits that are uniquely fixed by the initial homology property and the requirements of the contraction conditions. The outcome of doing so can be one of three:
\begin{enumerate}
    \item All image bits are fixed, thereby yielding a complete contraction map that is unique.
    \item A subset of the image bits are uniquely fixed, and some remain arbitrary, resulting in a partial contraction map that may or may not admit a contracting completion.
    \item At least one image bit admits no solution to the contraction conditions, thus definitively implying that no contraction map exists.
\end{enumerate}

While doing so will not (and should not) yield a valid contraction map for arbitrary candidate inequalities, it will generate valid partial contraction mappings, unless a contradiction is reached. If a contradiction is reached, however, this contradiction is demanded by the consistency of the initial data, and therefore cannot be fixed with backtracking. Once all deterministic assignments have been made, then in principle the remaining choices are genuinely free, and backtracking should not be required prior to that point. It may nevertheless be that all choices after the deterministic fixing lead to contradictions, but empirically this has not occurred with the inequalities we studied, and nevertheless backtracking to prior to this point should not be required. 

\begin{definition}
    Consider a map $f: H_M \to H_N$ between hypercubes $H_M$ and $H_N$, and denote the distance function on $H_M$ and $H_N$ by $d_{\alpha}$ and $d_{\beta}$ respectively: 
    \begin{itemize}
        \item A pair of vertices $x,y\in H_N$ is said to be \textit{Hamming distance preserving} if $$d_{\alpha}(x,y)=d_{\beta}(f(x),f(y)).$$
        \item Given $x,y\in H_M$ such that $x_k=y_k$ if and only if $k\in K\subseteq[M]$, then another vertex $z\in H_M$ is said to be on a Hamming path between $x$ and $y$ if $$z_k=x_k=y_k \qquad \forall \ k\in K.$$
        In words: if $x$ and $y$ have the same bits in some sub-bitstring, then any $z$ sharing that same sub-bitstring is said to be on a Hamming path between $x$ and $y$.\footnote{Equivalently, $z$ is on a Hamming path between $x$ and $y$ if it lies on some path between them of path length equal to $d_{\alpha}(x,y)$. The set of all vertices on Hamming path between $x$ and $y$ defines the sub-hypercube where all $x_k$ are fixed for $k\in K$ and free for $k\in [M]\smallsetminus K$. Notice that, though possible, $z$ being on the Hamming path between $x$ and $y$ does not imply $y$ being on the Hamming path between $x$ and $z$.}
    \end{itemize}
\end{definition}
It is worth quoting at this point an important defining property distance functions obey. For any $x,y,z\in H_M$, the triangle inequality holds:
\begin{equation}
\label{eq:triangle}
    d_{\alpha}(x,y) \leq d_{\alpha}(x,z) + d_{\alpha}(y,z).
\end{equation}
Hereafter, all results we prove involving a map $f: H_M \to H_N$ implicitly assume this map is a contraction map, i.e., $f$ obeys $d_{\alpha}(x,y)\geq d_{\beta}(f(x),f(y))$ for all $x,y\in H_M$. In other words, we are proving general properties pertaining to the contraction maps relevant to proofs for holographic entropy inequalities using Theorem \ref{thm:contraction}.

\subsection{Deterministic Method 1}
From the definition above, a useful result follows:
\begin{lemma}
\label{lem:hdp}
    If $x,y\in H_M$ are Hamming distance preserving, then all vertices on Hamming paths between $x$ and $y$ map to vertices in $H_N$ on Hamming paths between $f(x)$ and $f(y)$.
\end{lemma}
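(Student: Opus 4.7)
The plan is to reformulate the hypercube-geometric definition of ``on a Hamming path'' as a saturated triangle inequality, and then squeeze using contraction and the triangle inequality in $H_N$. First, I would observe that for any $z,x,y\in H_M$, $z$ lies on a Hamming path between $x$ and $y$ if and only if
\[
d_{\alpha}(x,z)+d_{\alpha}(z,y)=d_{\alpha}(x,y).
\]
This is essentially the content of the footnote accompanying the definition: working coordinate by coordinate, on the set $K$ where $x_k=y_k$ any deviation $z_k\neq x_k$ would contribute $2\alpha_k$ to the left-hand side but $0$ to the right, while on $[M]\setminus K$ each bit of $z$ necessarily agrees with exactly one of $x_k,y_k$, contributing $\alpha_k$ to each side. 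The same equivalence holds verbatim in $H_N$, so the goal reduces to showing
\[
d_{\beta}(f(x),f(z))+d_{\beta}(f(z),f(y))=d_{\beta}(f(x),f(y)).
\]

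Next, I would apply the contraction property of $f$ to the two pairs $(x,z)$ and $(z,y)$ and add the resulting inequalities:
\[
d_{\beta}(f(x),f(z))+d_{\beta}(f(z),f(y))\le d_{\alpha}(x,z)+d_{\alpha}(z,y)=d_{\alpha}(x,y),
\]
using the Hamming-path equality in the last step. Invoking the Hamming-distance-preserving hypothesis $d_{\alpha}(x,y)=d_{\beta}(f(x),f(y))$, this upper bound becomes $d_{\beta}(f(x),f(y))$. On the other hand, the triangle inequality \eqref{eq:triangle} applied in $H_N$ to $f(x),f(z),f(y)$ gives the matching lower bound
\[
d_{\beta}(f(x),f(y))\le d_{\beta}(f(x),f(z))+d_{\beta}(f(z),f(y)).
\]
The two bounds pinch, so equality must hold throughout, giving exactly the desired saturation and hence that $f(z)$ lies on a Hamming path between $f(x)$ and $f(y)$.

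There is no real obstacle: the argument is a two-line sandwich between contraction and the triangle inequality, enabled by the hypothesis promoting the weak contraction inequality to an equality on the pair $(x,y)$. The only subtlety worth being explicit about is the equivalence between the coordinate definition of ``on a Hamming path'' and the saturated-triangle characterization under the \emph{weighted} Hamming metric $d_{\alpha}$; since all $\alpha_l\ge 1>0$ by the normalization in \eqref{eq:ineq}, the coordinate-by-coordinate argument above goes through without modification, and no further hypothesis on the weights is needed.
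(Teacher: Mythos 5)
Your proof is correct and follows essentially the same route as the paper's: apply the contraction condition to the pairs $(x,z)$ and $(z,y)$, use the Hamming-distance-preserving hypothesis to turn the resulting upper bound into $d_{\beta}(f(x),f(y))$, and pinch against the triangle inequality in $H_N$. The only difference is that you spell out the equivalence between the coordinate definition of ``on a Hamming path'' and saturation of the triangle inequality under the weighted metric, which the paper uses implicitly (via its footnote); this is a worthwhile clarification but not a different argument.
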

\begin{proof}
    Let $x,y\in H_M$ be Hamming distance preserving, and consider another vertex $z\in H_M$. By letting $z$ be on a Hamming path between $x$ and $y$, we see that $d_{\alpha}(x,y)=d_{\alpha}(x,z)+d_{\alpha}(y,z)$.
    By assumption $d_{\alpha}(x,y)=d_{\beta}(f(x),f(y))$, and by the contraction condition it must be the case that $d_{\alpha}(x,z)+d_{\alpha}(y,z) \geq d_{\beta}(f(x),f(z))+d_{\beta}(f(y),f(z))$. Hence, in $H_N$ we obtain \begin{equation}
        d_{\beta}(f(x),f(y)) \geq d_{\beta}(f(x),f(z))+d_{\beta}(f(y),f(z)).
    \end{equation}
    However, as a distance function on $H_N$, $d_{\beta}$ obeys the triangle inequality \eqref{eq:triangle}, leading to
    \begin{equation}
        d_{\beta}(f(x),f(y)) = d_{\beta}(f(x),f(z))+d_{\beta}(f(y),f(z)).
    \end{equation}
    This in turn implies $f(z)$ is on a Hamming path between $f(x)$ and $f(y)$ in $H_N$, as claimed.
    
    %\sout{If $f(z)$ is not on the Hamming path between $f(x)$ and $f(y)$, then $d(f(x),f(z))+d(f(y),f(z))\geq d(f(x),f(y))+2=d(x,y)+2=d(x,z)+d(y,z)+2\geq d(x,z)+d(y,z),$ where the $2$ comes from the cost of detouring off of and then back onto the Hamming path. Therefore, at least one of $x,z$ or $y,z$ must fail the contraction condition. Note that the $2$ means that all points $x$ and $y$ map to points on Hamming paths between $f(x)$ and $f(y)$ even if $d(x,y)=d(f(x),f(y))+1$. Putting it all together:}
\end{proof}

The above result implies that any two Hamming distance preserving vertices $x,y\in H_M$ provide precise information about the image $f(z)$ for any vertex $z\in H_M$ on a Hamming path between them. Explicitly, for every $r\in[N]$ with coincident bit images $f(x)_r=f(y)_r\equiv b$, one can immediately assign the value $f(z)_r=b$. In fact, the requirement that $x$ and $y$ be Hamming distance preserving can be relaxed to obtain an even stronger result:

\begin{theorem}
\label{thm:genlemma}
    If $x,y\in H_M$ obey $0\leq d_{\alpha}(x,y)- d_{\beta}(f(x),f(y))\leq1$, then all vertices on Hamming paths between $x$ and $y$ map to vertices in $H_N$ on Hamming paths between $f(x)$ and $f(y)$.\footnote{Lemma \ref{lem:hdp} corresponds to the special case $d_{\alpha}(x,y)= d_{\beta}(f(x),f(y))$. This stronger result relaxes the Hamming distance preserving condition such that the conclusion holds more generally also when the distance on $H_M$ is reduced by at most one Hamming step on $H_N$.}
\end{theorem}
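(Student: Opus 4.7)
The plan is to retrace the proof of Lemma \ref{lem:hdp}, but to replace its final equality-in-triangle-inequality step with a parity/integrality argument on the weighted Hamming distance. The key observation motivating this is that, on a hypercube with integer weights $\beta_r \geq 1$, the ``triangle excess'' $d_\beta(a,c)+d_\beta(c,b)-d_\beta(a,b)$ decomposes coordinate-wise into terms that are each either $0$ or $2\beta_r$, so the total excess is always a non-negative even integer lying in $\{0,2,4,\dots\}$. Thus any upper bound of the form ``excess $\leq 1$'' must collapse to ``excess $= 0$'', which is precisely the condition encoded in the hypothesis $d_{\alpha}(x,y)-d_{\beta}(f(x),f(y))\leq 1$.

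Concretely, I would fix an arbitrary $z\in H_M$ on a Hamming path between $x$ and $y$, so that by definition $d_\alpha(x,y)=d_\alpha(x,z)+d_\alpha(z,y)$. Applying the contraction property to each leg, and invoking the hypothesis, then yields
\begin{equation*}
    E \equiv d_\beta(f(x),f(z)) + d_\beta(f(z),f(y)) - d_\beta(f(x),f(y)) \leq 1,
\end{equation*}
while the triangle inequality \eqref{eq:triangle} on $H_N$ ensures $E\geq 0$. Next I would note that the claim $f(z)$ lies on a Hamming path between $f(x)$ and $f(y)$ is equivalent to $E=0$: for each $r\in[N]$ the coordinate contribution $\beta_r\bigl[|f(x)_r-f(z)_r|+|f(z)_r-f(y)_r|-|f(x)_r-f(y)_r|\bigr]$ vanishes except when $f(x)_r=f(y)_r\neq f(z)_r$, in which case it equals $2\beta_r$, by a three-case check on binary values.

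Putting these pieces together, $E=2\sum_{r\in S}\beta_r$ for some $S\subseteq[N]$ with each $\beta_r$ a positive integer, so $E$ is either $0$ or at least $2$; combined with $E\leq 1$ this forces $E=0$, hence $S=\emptyset$, hence $f(z)_r=f(x)_r=f(y)_r$ on every coordinate where $f(x)_r=f(y)_r$. The main conceptual step, and the only place the proof departs from Lemma \ref{lem:hdp}, is recognizing the parity structure of $E$; once it is isolated, the threshold ``$\leq 1$'' in the hypothesis is tight in the sense that it lies strictly below the smallest nonzero value $E$ can take. A minor caveat is that this argument relies on the $\beta_r$ being positive integers, which is the standing convention for holographic entropy inequalities and is consistent with the earlier footnote about repeating RHS terms with coefficient larger than one.
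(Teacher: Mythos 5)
Your proof is correct and is essentially the paper's argument in a cleaner, direct form: both rest on the observation that if $f(z)$ leaves the Hamming path between $f(x)$ and $f(y)$, then $d_\beta(f(x),f(z))+d_\beta(f(y),f(z))$ exceeds $d_\beta(f(x),f(y))$ by at least $2$ (your $E=2\sum_{r\in S}\beta_r$ is exactly the paper's ``at least $+1$ on each leg''), which is incompatible with the slack of at most $1$ permitted by the hypothesis together with the contraction condition applied along the legs $x$--$z$ and $z$--$y$. The only cosmetic differences are that the paper phrases this as a contradiction for an auxiliary off-path map $\tilde f$ while you compute the triangle excess directly, and that your appeal to integrality of the $\beta_r$ is slightly stronger than needed --- $\beta_r\geq 1$ already gives $E\in\{0\}\cup[2,\infty)$.
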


\begin{proof}
    Consider a vertex $z\in H_M$ on a Hamming path between $x$ and $y$, such that $d_{\alpha}(x,y)=d_{\alpha}(x,z)+d_{\alpha}(y,z)$. If $f(z)$ is on a Hamming path between $f(x)$ and $f(y)$, we would have $d_{\beta}(f(x),f(y)) = d_{\beta}(f(x),f(z))+d_{\beta}(f(y),f(z))$.
    Consider an alternative map $\tilde{f}$ yielding an image $\tilde{f}(z)$ off of Hamming paths between $f(x)$ and $f(y)$, but otherwise equal to $f$. This requires $\tilde{f}(z)$ having at least one bit flipped relative to $f(x)$ and $f(y)$ where the latter two coincide, and consequently also relative to $f(z)$. The resulting new map $\tilde{f}(z)$ will thus give distances obeying $d_{\beta}(\tilde{f}(x),\tilde{f}(z)) \geq d_{\beta}(f(x),f(z)) + 1$ and $d_{\beta}(\tilde{f}(y),\tilde{f}(z)) \geq d_{\beta}(f(y),f(z)) + 1$. Combining these, we obtain
    \begin{equation}
    \label{eq:nocont}
        \begin{aligned}
            d_{\beta}(\tilde{f}(x),\tilde{f}(z)) + d_{\beta}(\tilde{f}(y),\tilde{f}(z))
            & \geq d_{\beta}(f(x),f(z)) + d_{\beta}(f(y),f(z)) + 2 \\
            & \geq d_{\beta}(f(x),f(y)) + 2 \\
            & = d_{\alpha}(x,y) + 1 \\
            & = d_{\alpha}(x,z) + d_{\alpha}(y,z) + 1
        \end{aligned}    
    \end{equation}
    where the second line uses the triangle inequality on $H_N$, the third follows by hypothesis, and the fourth by the assumption that $z$ is on a Hamming path between $x$ and $y$. That the LHS of \eqref{eq:nocont} is strictly greater than $d_{\alpha}(x,z)+d_{\alpha}(y,z)$ is a direct violation of the contraction condition, implying that $\tilde{f}$ is not a contraction map.
    %Using now the triangle inequality for $d$ on $H_N$ leads to
    %$$d(f(x),f(y)) \leq d(f(x),f(z))+d(f(y),f(z))$$
\end{proof}

This result grants the following assignment rule to deterministically fix entries of a contraction map that uniquely solve the contraction conditions:

\begin{drule}
\label{rule1}
    For every $x,y\in H_M$ such that $0\leq d_{\alpha}(x,y)- d_{\beta}(f(x),f(y))\leq1$ and every $z \in H_M$ on a Hamming path between $x$ and $y$, Theorem \ref{thm:genlemma} uniquely fixes the following bits of $f(z)$:
    \begin{equation}
        f(z)_r = b \qquad \forall \ r\in[N] ~~ s.t. ~~ f(x)_r=f(y)_r\equiv b.
    \end{equation}
\end{drule}

This constraint can therefore be used to fix bits in bitstrings of $H_N$ on Hamming paths between $f(x)$ and $f(y)$ that satisfy the above condition. This provides a powerful method of directly assigning bits of the $H_N$ hypercube without the need to solve a naively NP-complete constraint satisfaction problem. Instead, one simply needs to determine all vertices along Hamming preserving paths of the initial data and fix their matched bits. Once new fully fixed bitstrings on the $H_N$ have been discovered, they may be appended to the initial data to search for new Hamming preserving paths.

The number of bits fixed by a single Hamming preserving path can be computed as follows: if $d_{\alpha}(x,y)=D$, and the dimensions of the $H_M$ and $H_N$ hypercubes are $M$ and $N$, respectively, then the number of fixed bits per row is $N-D$ and the number of rows for which fixing occurs is $2^D-2$. This means that the total number of fixed bits is $(2^D-2)(N-D)$. The total number of initially free bits is $(2^M-I) N$, where $I$ is the number of initial conditions. The ratio of these goes to one as $D$ approaches $M$ and if $D\ll N$. Note that this analysis does not address double counting of fixed data from subsequent Hamming preserving paths, so multiple paths would provide an overestimate.

\subsection{Deterministic Method 2}
A second constraint comes from considering when all degrees of freedom of a $H_N$ bitstring have been exhausted, in the sense that the contraction condition is saturated. In particular, consider a pair of vertices $x,y\in H_M$ such that their images $f(x)$ and $f(y)$ have been partially fixed (i.e., some image bits have been uniquely determined e.g. via Rule \ref{rule1}, but some remain undetermined). Then for any completion of the map $f$, we can lower-bound $d_{\beta}(f(x),f(y))$ by the partial distance $d^{\text{fixed}}_{\beta}(f(x),f(y))$ where $d_{\beta}$ is only applied to the dimensions $r\in[N]$ where both $f(x)_r$ and $f(y)_r$ have already been fixed. Explicitly, letting $I_x,I_y\subseteq[N]$ label the bits respectively in $f(x)$ and $f(y)$ that have already been uniquely fixed,
\begin{equation}
    d_{\alpha}(x,y)\geq d_{\beta}(f(x),f(y)) \geq d_{\beta}^{\text{fixed}}(f(x),f(y)) \equiv \sum_{r\in I_x\cap I_y} \beta_r |f(x)_r - f(y)_r|.
\end{equation}
This way one easily obtains the following rule:
\begin{drule}
\label{rule2}
    For every $x,y\in H_M$ such that $d_{\beta}^{\text{fixed}}(f(x),f(y)) = d_{\alpha}(x,y)$, the following bits get newly fixed:
    \begin{equation}
        f(x)_r=b \qquad \forall \ r\in I_y \smallsetminus I_x ~~ s.t. ~~ f(y)_r\equiv b, 
    \end{equation}
    and similarly under the exchange $x\leftrightarrow y$.
\end{drule}
In other words, the saturation of the contraction condition means that all remaining unfixed bits of $f(x)$ and $f(y)$ must match, which leads to uniquely determined images for all bits $f(x)_r$ and $f(y)_r$ with $r\in I_x \cup I_y$. While it is clear that unfixed bits on $[N] \smallsetminus (I_x \cup I_y)$ will necessarily also have to match between the two bitstrings, their specific value can however not be fixed by Rule \ref{rule2}.

This is a method that can be implemented after the first method to fix further bits. After it has been run, the first method can be run again as subsequent Hamming preserving paths are revealed. These methods are then alternated until no additional bits are fixed by either approach.

\subsection{Deterministic Method 3}

An additional definition is required to obtain our next rule:
\begin{definition}
    Given $x,y\in H_M$, another vertex $z\in H_M$ is said to be $k$-off Hamming paths between $x$ and $y$ if $k$ is the minimal distance from $z$ to any vertex $e$ on Hamming paths between $x$ and $y$,
    \begin{equation}
        k=\min_{e\in HP(x,y)}{d_{\alpha}(z,e)},
    \end{equation}
    where $HP(x,y)$ is the set of vertices appearing on the Hamming paths between x and y.
\end{definition}
This notion can be used to generalize Lemma \ref{lem:hdp} as follows:
\begin{lemma}
\label{lem:kohdp}
    If $x,y\in H_M$ are Hamming distance preserving, then any vertex that is $k$-off Hamming paths between $x$ and $y$ maps to a vertex in $H_N$ that is at most $k$-off Hamming paths between $f(x)$ and $f(y)$.
\end{lemma}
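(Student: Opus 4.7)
The plan is to leverage Lemma \ref{lem:hdp}, which already guarantees that when $x,y$ are Hamming distance preserving, every vertex on a Hamming path between them is mapped into a Hamming path between $f(x)$ and $f(y)$. Once we have that in hand, the present claim reduces to a one-line application of the contraction condition, so the proof will be structured as: (i) pick a witness $e$ for the minimum in the definition of $k$-off, (ii) transport $e$ across $f$ using Lemma \ref{lem:hdp}, (iii) use contraction on the pair $(z,e)$, (iv) read off the bound.

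In more detail, I would first choose $e \in HP(x,y)$ with $d_\alpha(z,e) = k$; such an $e$ exists because $k$ is defined as a minimum over a nonempty finite set. Next, because $x$ and $y$ are by hypothesis Hamming distance preserving, Lemma \ref{lem:hdp} applies and yields $f(e) \in HP(f(x),f(y))$. Now applying the contraction condition directly to the pair $(z,e)$ gives
\begin{equation}
    d_\beta(f(z),f(e)) \leq d_\alpha(z,e) = k.
\end{equation}
Since $f(e)$ is itself a vertex on a Hamming path between $f(x)$ and $f(y)$, this inequality bounds the minimum distance from $f(z)$ to $HP(f(x),f(y))$ by $k$, which is exactly the statement that $f(z)$ is at most $k$-off Hamming paths between $f(x)$ and $f(y)$.

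There is no real obstacle in this argument — every step is either a definitional unpacking or a direct invocation of a previously established result. The only subtlety worth flagging is the asymmetry in the conclusion: one proves \emph{at most} $k$-off rather than \emph{exactly} $k$-off. This asymmetry is necessary, because the contraction map may compress the neighborhood around $HP(x,y)$, so $f(z)$ could lie strictly closer than $k$ to $HP(f(x),f(y))$; attempting to prove equality would fail, and is not needed for the downstream deterministic rule that this lemma will enable.
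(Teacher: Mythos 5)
Your proof is correct and follows essentially the same route as the paper's: choose a witness $e\in HP(x,y)$ realizing the minimum $k$, push it through Lemma \ref{lem:hdp} to get $f(e)\in HP(f(x),f(y))$, and apply the contraction condition to the pair $(z,e)$ to bound the distance from $f(z)$ to that set by $k$. Your closing remark about why the conclusion is only ``at most'' $k$-off is a sensible observation, though not part of the paper's argument.
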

\begin{proof}
    Let $z\in H_M$ be a vertex $k$-off Hamming paths between $x,y\in H_M$. From the definition, let $e\in H_M$ be on a Hamming path between $x$ and $y$ such that $d_{\alpha}(z,e)=k$, and by the contraction condition it must be the case that $d_{\alpha}(z,e)\geq d_{\beta}(f(z),f(e))$, so we have $d_{\beta}(f(z),f(e))\leq k$. Since by Lemma \ref{lem:hdp} every $f(e)$ lies on a Hamming path between $f(x)$ and $f(y)$, we conclude that $f(z)$ is at most $k$-off Hamming paths between them.
\end{proof}

Now we can prove our next main result. Formally,

\begin{theorem}
    Let $z$ be a vertex on $H_M$ that is one-off from both a Hamming path between $x$ and $y$ and a Hamming path between $a$ and $b$ on $H_M$. Further, let $d_{\alpha}(x,y)=d_{\beta}(f(x),f(y))$ and $d_{\alpha}(a,b)=d_{\beta}(f(a),f(b))$. If $\exists$ a $H_N$ column $r$ such that $f(x)_r=f(y)_r\neq f(a)_r=f(b)_r$, then all matching bits between $f(a), f(b), f(x)$, and $f(y)$ are fixed for $f(z)$. Furthermore, $f(z)_r$ will not match one of $f(x)_r=f(y)_r$ or $f(a)_r=f(b)_r$, and whichever Hamming path it does not match in this column, $f(z)$ will match all remaining bits of that Hamming path.
\end{theorem}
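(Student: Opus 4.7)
The plan is to derive the three conclusions by applying Lemma \ref{lem:kohdp} to each of the two Hamming-distance-preserving pairs $(x,y)$ and $(a,b)$, and then to exploit the fact that the column $r$ where the two pairs disagree consumes the entire one-off budget permitted by that lemma.

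First, I would invoke Lemma \ref{lem:kohdp} twice: since $(x,y)$ and $(a,b)$ are each Hamming-distance preserving and $z$ is one-off from both Hamming paths in $H_M$, the lemma yields that $f(z)$ is at most one-off from both $HP(f(x),f(y))$ and $HP(f(a),f(b))$ in $H_N$. The second conclusion of the theorem is then immediate: since $f(x)_r=f(y)_r$ and $f(a)_r=f(b)_r$ take opposite binary values at column $r$, the single bit $f(z)_r$ must agree with exactly one of the two pairs.

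Next, without loss of generality suppose $f(z)_r\neq f(x)_r=f(y)_r$; the other case is handled by relabeling. Every vertex $e\in HP(f(x),f(y))$ satisfies $e_r = f(x)_r$, so $f(z)$ already differs from every such $e$ at column $r$, and hence $d_\beta(f(z),e)\geq \beta_r$. Combined with the at-most-one-off bound from Lemma \ref{lem:kohdp}, this pins the minimizer $e^{\ast}\in HP(f(x),f(y))$ to differ from $f(z)$ only at column $r$. Since every vertex on $HP(f(x),f(y))$ agrees with $f(x)$ and $f(y)$ at every matching column, and since $f(z)$ coincides with $e^{\ast}$ at all columns other than $r$, I obtain $f(z)_s = f(x)_s = f(y)_s$ for every $s \neq r$ with $f(x)_s = f(y)_s$. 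This is precisely the third conclusion. The first conclusion then follows as a corollary: any column where all four image bits agree is in particular an $xy$-matching column, so $f(z)$ inherits the common value there.

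The main obstacle I anticipate is the bookkeeping of the weights $\beta_r$. Because the \emph{one-off} notion is defined in terms of the weighted distance $d_\beta$, the combination $d_\beta(f(z),e)\geq \beta_r$ with the one-off bound from Lemma \ref{lem:kohdp} is only consistent when $\beta_r=1$; equivalently, if $\beta_r>1$ no contraction map can realize the configuration with $f(z)_r$ placed opposite to $f(x)_r=f(y)_r$. I would handle this either by restricting attention to columns of unit weight (the typical setting for the rules in this section) or by explicitly noting that $\beta_r>1$ simply yields a contradiction that itself functions as a further deterministic constraint, ruling the candidate map out.
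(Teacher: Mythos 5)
Your proof is correct and follows essentially the same route as the paper's: two applications of Lemma \ref{lem:kohdp}, followed by the observation that the forced disagreement at column $r$ exhausts the one-off budget on one of the two image Hamming paths, thereby fixing all remaining matched bits; your explicit minimizer $e^{\ast}$ is just a more careful phrasing of the paper's ``no remaining degrees of freedom'' step. The weight bookkeeping you worry about is moot here, since by the paper's convention (see the footnote in Section \ref{sec:method}) RHS terms with coefficient greater than one are expanded into repeated columns, so $\beta_r=1$ throughout $H_N$.
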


\begin{proof}
    Let $z$ be a vertex on $H_M$ that is one-off from both a Hamming path between $x$ and $y$ and a Hamming path between $a$ and $b$ on $H_M$. Further, let $d_{\alpha}(x,y)=d_{\beta}(f(x),f(y))$ and $d_{\alpha}(a,b)=d_{\beta}(f(a),f(b))$. This forces by the previous lemma that $f(z)$ is at most one-off Hamming paths between $f(x)$ and $f(y)$, and also is at most one-off Hamming paths between $f(a)$ and $f(b)$. Furthermore, if $\exists$ a $H_N$ column $r$ such that $f(x)_r=f(y)_r\neq f(a)_r=f(b)_r$, then $f(z)$ is either exactly one-off Hamming paths between $f(x)$ and $f(y)$ or is exactly one-off Hamming paths between $f(a)$ and $f(b)$. This immediately requires that all matching bits between $f(a), f(b), f(x)$, and $f(y)$ are fixed for $f(z)$, as there are no remaining degrees of freedom for $f(z)$; if this were not the case, it would be at least two-off Hamming paths between $f(x)$ and $f(y)$ or at least two-off Hamming paths between $f(a)$ and $f(b)$. Furthermore, if WLOG $f(z)$ is exactly one-off $f(a)$ and $f(b)$, e.g. $f(z)_r=f(x)_r=f(y)_r\neq f(a)_r=f(b)_r$, then $f(z)$ must match all other matched bits between $f(a)$ and $f(b)$, for the same reason that all available degrees of freedom have been exhausted.
\end{proof}

The above generalizes easily for $k$-off for integer $k\geq 0$. Going to higher (for example six) point functions with three Hamming paths does not help, as any column can only be either zero or one, and so any higher point comparison would degenerate into comparisons of pairs of power sets of Hamming paths.

Indeed, the constraints generated by these three methods, and the third in particular, would seem to be complete, as any potential consequence of a contraction map constraint condition that yields an unconditionally fixed bit would fall into one of these three categories. Therefore, once these methods have been applied the remaining choices are observed to be free\footnote{We do not claim that these three rules are the complete set of rules. But empirically, they seem to fix (almost) all the deterministic bits and suffice to generate the contraction maps without backtracking.}. This has been empirically verified on all the five-party inequalities\cite{Hern_ndez_Cuenca_2019}, the known 384 six-party inequalities\cite{hubeny:2023hei} and the known seven-party inequalities\cite{Czech_2023}\cite{Czech:2023toric} for holographic entanglement entropy, as subsequent choices can be made without needing backtracking in all of these cases.

\subsection{Combining the Deterministic Methods}
Practically, the first and second methods both run in trivial amounts of time, while the third method is slower. The precise time complexity of these methods will be precisely described in Section \ref{sec:complexity}. Therefore, we adopt a strategy of alternating the first and second methods until stability has been reached before running the third method. After this, the first and the second methods are run until stability has been reached, at which point the third method is run again, with termination at a point where the third method does not fix any additional bits.

\subsubsection{Choice Constraints}
There is a constraint of a different type which we state as the following theorem:
\begin{theorem}\label{thm:choice_1}
   If $d_{\alpha}(x,y)=2$ or $d_{\alpha}(x,y)=3$ and $d_{\beta}(f(x),f(y))=0$, then if we take a vertex $z$ on the Hamming path between $x$ and $y$, then either $d_{\beta}(f(x),f(z))=0$ or $d_{\beta}(f(x),f(z))=1$.
\end{theorem}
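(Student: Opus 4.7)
The plan is to combine two applications of the contraction condition with the additivity of $d_{\alpha}$ along Hamming paths, and then to exploit integrality of $d_{\beta}$. The hypothesis $d_{\beta}(f(x),f(y))=0$ simply says $f(x)=f(y)$, so for any vertex $z$ one has $d_{\beta}(f(x),f(z))=d_{\beta}(f(y),f(z))$. Moreover, since $z$ lies on a Hamming path between $x$ and $y$, the defining property (as reiterated in the footnote of the Hamming-path definition) yields the additivity $d_{\alpha}(x,z)+d_{\alpha}(y,z)=d_{\alpha}(x,y)$.

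Next, I would apply the contraction condition separately to the pairs $(x,z)$ and $(y,z)$ and sum the two inequalities:
\begin{equation*}
    d_{\beta}(f(x),f(z))+d_{\beta}(f(y),f(z)) \leq d_{\alpha}(x,z)+d_{\alpha}(y,z) = d_{\alpha}(x,y) \leq 3.
\end{equation*}
Using $f(x)=f(y)$ collapses the left-hand side to $2\,d_{\beta}(f(x),f(z))$, so $d_{\beta}(f(x),f(z))\leq 3/2$.

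Finally, I would invoke the convention recorded in the footnote of \eqref{eq:weightedham}: on the RHS hypercube all $\beta_r$ are taken to be $1$ (any larger integer coefficient is realized by repeating the term), so $d_{\beta}$ is a standard, integer-valued Hamming distance. The bound $d_{\beta}(f(x),f(z))\leq 3/2$ then collapses to $d_{\beta}(f(x),f(z))\in\{0,1\}$, which is the claim. The main obstacle, modest as it is, lies precisely in this last integrality step: the $d_{\alpha}(x,y)=2$ case needs nothing beyond $2\,d_{\beta}\leq 2$, but the $d_{\alpha}(x,y)=3$ case genuinely relies on the $\beta_r=1$ convention to upgrade $d_{\beta}\leq 3/2$ into $d_{\beta}\leq 1$; this is the one place where an alert reader might worry about edge cases involving non-unit RHS weights, and it should be flagged explicitly.
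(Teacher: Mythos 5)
Your proof is correct, and it reaches the conclusion by a slightly different route than the paper. The paper's proof notes that (with integer LHS weights) any $z$ strictly between $x$ and $y$ with $d_{\alpha}(x,y)\in\{2,3\}$ has, WLOG, $d_{\alpha}(y,z)=1$; a single application of the contraction condition then gives $d_{\beta}(f(y),f(z))\le 1$, and $f(x)=f(y)$ finishes the argument with no appeal to integrality of $d_{\beta}$. You instead sum the contraction conditions on both legs, use additivity of $d_{\alpha}$ along the Hamming path to get $2\,d_{\beta}(f(x),f(z))\le d_{\alpha}(x,y)\le 3$, and then invoke the convention that RHS terms with coefficient greater than one are expanded so that all $\beta_r=1$, making $d_{\beta}$ integer-valued and rounding $3/2$ down to $1$. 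In effect you trade the paper's implicit reliance on integer $\alpha$-weights (needed for its WLOG that one leg has length exactly one) for reliance on integer-valued $d_{\beta}$ (needed for the rounding); both conventions hold in this setup, so both arguments are sound. You are right to flag the integrality step as load-bearing in the $d_{\alpha}(x,y)=3$ case of your version — it is the one ingredient the paper's shorter argument does not need.
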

\begin{proof}
    WLOG, consider $d_{\alpha}(y,z)=1$. Then, $d_{\alpha}(x,z)=1$ or $d_{\alpha}(x,z)=2$.
    By contraction map condition, $d_\beta(f(y),f(z))\leq 1$. Since $d_{\beta}(f(x),f(y))=0$, thus $d_\beta(f(x),f(z))\leq 1$.
\end{proof}
In other words, $z$ can only map to either the same vertex that both $x$ and $y$ map to or a neighbor of that vertex. Not doing so would lead to a violation of the contraction condition.
We introduce the following definition to state a related constraint.
\begin{definition}
    Given $x,y\in H_M$, $y$ is said to be distance-$k$-neighbor of $x$ (and vice-versa) if
    \begin{equation}
        d_{\alpha}(x,y)=k.
    \end{equation}
    For a given vertex $z$, we say that vertex $w \in N_{\alpha}(z,k)$, if $w$ is a distance-$k$-neighbor of $z$. We call the set $N_{\alpha}(z,k)$ as the set of distance-$k$-neighbors of $z$.
\end{definition}
The following constraint is useful to choose bitstrings that are neither fixed by the deterministic methods nor by the constraint in theorem \ref{thm:choice_1}.
\begin{theorem}\label{thm:choice_2}
  Consider a vertex $z \in H_M$. Given the set of distance-1-neighbors of $z$, $N_{\alpha}(z,1)\subset H_M$, define the set $I(z)\subset H_N$ such that
  \begin{equation}
      I(z)=\cap_{x\in N_{\alpha}(z,1)} \left( N_{\beta}(f(x),1) \cup f(x) \right)
  \end{equation} Then, $f(z)\in I(z)$.
\end{theorem}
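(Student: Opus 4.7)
The plan is to derive the membership $f(z) \in I(z)$ by a direct pairwise application of the contraction condition between $z$ and each of its distance-1 neighbors in $H_M$, and then to assemble the resulting constraints into the stated intersection.

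First, I would fix an arbitrary $x \in N_{\alpha}(z,1)$, so that by definition $d_{\alpha}(z,x)=1$. Since $f$ is assumed throughout this section to be a contraction map, the contraction condition immediately yields
\[
  d_{\beta}(f(z),f(x)) \;\leq\; d_{\alpha}(z,x) \;=\; 1.
\]
The key translation step is to turn this numerical inequality into the set-theoretic statement $f(z) \in N_{\beta}(f(x),1) \cup \{f(x)\}$. Because all $\beta_r \geq 1$ by convention, and bit differences between hypercube vertices are integers in $\{0,1\}$, the weighted Hamming distance $d_{\beta}$ takes values in $\{0\} \cup [1,\infty)$ on $H_N \times H_N$. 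Hence the bound $d_{\beta}(f(z),f(x)) \leq 1$ actually forces $d_{\beta}(f(z),f(x)) \in \{0,1\}$, i.e., either $f(z)=f(x)$ or $f(z) \in N_{\beta}(f(x),1)$.

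Since $x$ was an arbitrary element of $N_{\alpha}(z,1)$, the same containment holds for every factor in the intersection defining $I(z)$, and therefore $f(z) \in \bigcap_{x \in N_{\alpha}(z,1)}(N_{\beta}(f(x),1)\cup\{f(x)\}) = I(z)$, as claimed.

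The main (and really only) subtlety is the definitional unpacking in the second step: the set $N_{\beta}(f(x),1)$ uses \emph{exact} weighted distance $1$, so one must separately note that the $f(z)=f(x)$ case is absorbed by the explicit union with $\{f(x)\}$, and that no intermediate weighted distance strictly between $0$ and $1$ can arise given $\beta_r \geq 1$. Apart from this integrality observation, the result is an essentially immediate repackaging of the contraction condition restricted to the one-hop neighborhood of $z$, and no additional combinatorial argument (of the kind used in the earlier Hamming-path theorems) is needed.
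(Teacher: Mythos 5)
Your proof is correct and follows essentially the same route as the paper's: apply the contraction condition to each distance-1 neighbor of $z$ to get $d_{\beta}(f(z),f(x))\leq 1$, translate that into $f(z)\in N_{\beta}(f(x),1)\cup\{f(x)\}$, and intersect over all neighbors. Your explicit remark that $\beta_r\geq 1$ rules out weighted distances strictly between $0$ and $1$ is a small but worthwhile clarification that the paper's proof leaves implicit.
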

\begin{proof}
    Given the vertex $z$, consider two distinct vertices $x,y\in N_{\alpha}(v,1)$, i.e, $d_{\alpha}(x,z)=1$ and $d_{\alpha}(y,z)=1$. By contraction map condition, $d_{\beta}(f(x),f(z))\leq1$ and $d_{\beta}(f(y),f(z))\leq1$, i.e, $f(z)\in N_{\beta}(f(x),1) \cup f(x)$ and $f(z)\in N_{\beta}(f(y),1) \cup f(y)$. This gives us $f(z)\in \left(N_{\beta}(f(x),1) \cup f(x)\right) \cap \left(N_{\beta}(f(y),1) \cup f(y)\right)$. Since $x, y$ are arbitrary, by induction $f(z)\in I(z)$.
\end{proof}
Theorem \ref{thm:choice_2} can be generalized for up to distance-$k$-neighbors. Modulo these conditional constraints, one then simply freely chooses an unfixed bit (say the top-left bit in the tabular representation of $H_N$ and fixes it to be $1$ or $0$). For technical reasons, fixing it to be $1$ results in faster convergence than fixing it to be $0$, as the $H_N$ deterministically fixed bitstrings tend to have more $0$'s than $1$'s. After this bit is fixed, the deterministic fixer methods are run again until stability is achieved, at which point another choice is made. Choices are made in this way until a contradiction is reached, or a full contraction map has been specified. It is interesting to quantify the number of choices that must be made in any given known valid inequality.

\subsubsection{Examples}
As an example, consider the five-party inequality given by 

\begin{equation}
\begin{split}
3S(ABC) + 3S(ABD) + 3S(ACE) + S(ABE) + S(ACD) + S(ADE) + S(BCD) + \\
S(BCE) + S(BDE) + S(CDE) \geq 2S(AB) + 2S(ABCD) + 2S(ABCE) + 2S(AC) + \\
2S(BD) + 2S(CE) + S(ABDE) + S(ACDE) + S(AD) + S(AE) + S(BC) + S(DE).\\    
\end{split}\label{5.5}
\end{equation}
This inequality has $10$ terms on the LHS and $18$ terms on the expanded RHS. The contraction between hypercubes is a map $f:2^{10}\rightarrow 2^{18}$. There are $1024$ bitstrings, each having $18$ bits on the RHS totalling to $18432$ bits. Before making any free choice, the deterministic methods $1$ and $2$ fully fixes $170$ bitstrings and partially fixes $842$ more bitstrings, leaving a total of $3636$ bits unfixed. Applying method $3$, the number of unfixed bits is reduced to $3558$. The greedy algorithm leaves $3588$ bits unfixed, i.e, it fixes $30$ less bits than our methods $1$, $2$ and $3$ combined.\footnote{Note that the improvement in performance of our algorithm compared to the greedy algorithm is specific to this example. For most cases, they fix the same number of bits, except in some cases where it fixes slightly more and in a few cases, it fixes slightly less, hinting that the rules are incomplete.} This inequality has a \emph{maximal choice}, or upper bound of choices that must be made if the choices are made to require the largest number of subsequent choices, of 142 bits choosing them all to be 0 and choosing an entire string of 18 bits. The true number of choices are less than the maximal choice (for example, choosing 1 at every choice instead of 0 ends up with a total of 83 bit choices instead). This number of choices is representative of the number that must be made for the other inequalities at hand. We summarize the details of the contraction maps for the five five-party inequalities found in \cite{Bao_2015} in the table \ref{tab:5party_stats}.\footnote{We attach a few more examples of known six-party and seven-party inequalities in the Appendix \ref{app:a}.}
The four other five-party inequalities (\ref{5.5} being the fifth one) are as follows
\begin{equation}
 \begin{split}
& S(ABC) + S(ABD) + S(ACE) + S(BCD) + S(BCE) \geq \\ & S(A) + S(BC) + S(BD) + S(CE) + S(ABCD) + S(ABCE)\label{5.1}
\end{split} \end{equation}
\begin{equation}
 \begin{split}
& S(AD) + S(BC) + S(ABE) + S(ACE) + S(ADE) + S(BDE) + S(CDE) \geq \\ & S(A) + S(B) + S(C) + S(D) + S(AE) + S(DE) + S(BCE) + S(ABDE) + S(ACDE)\label{5.2}
\end{split} \end{equation}
\begin{equation}
 \begin{split}
& 2S(ABC) + S(ABD) + S(ABE) + S(ACD) + S(ADE) + S(BCE) + S(BDE) \geq \\ & S(AB) + S(AC) + S(AD) + S(BC) + S(BE) + S(DE) + S(ABCD) + S(ABCE) + S(ABDE)\label{5.3}
\end{split} \end{equation}
\begin{equation}
 \begin{split}
& S(ABC) + S(ABD) + S(ABE) + S(ACD) + S(ACE) + S(ADE) + S(BCE) + S(BDE) \\ & + S(CDE) \geq  S(AB) + S(AC) + S(AD) + S(BE) + S(CE) + S(DE) + S(BCD)  \\ &  + S(ABCE) + S(ABDE) + S(ACDE) \label{5.4}
\end{split} \end{equation}

\begin{table}[h!]
\centering
\begin{tabular}{@{}llllll@{}}
\toprule
\textbf{Inequality} &
  \textbf{M} &
  \textbf{N} &
  \textbf{\begin{tabular}[c]{@{}l@{}}\% bits fixed by\\ methods 1 \& 2\\ until first choice\end{tabular}} &
  \textbf{\begin{tabular}[c]{@{}l@{}}No. of \\ bit\\ choices\end{tabular}} &
  \textbf{\begin{tabular}[c]{@{}l@{}}No. of \\ string\\ choices\end{tabular}} \\ \midrule
\ref{5.1} & 5  & 6  & 94.79 & 5   & 0  \\
\ref{5.2} & 7  & 9  & 77.60 & 35  & 0  \\
\ref{5.3} & 7  & 9  & 93.75 & 18  & 0  \\
\ref{5.4} & 9  & 10 & 93.78 & 22  & 12 \\
\ref{5.5} & 10 & 18 & 80.27 & 142 & 1  \\ \bottomrule
\end{tabular}
\caption{Summary of 5-party contraction maps with LHS and RHS of length M and N respectively. This table shows the percentage of bits fixed by deterministic run from methods 1 and 2 before making the first choices and finally notes the number of choices made to generate a contraction map.}
\label{tab:5party_stats}
\end{table}

We also give two examples of previously unknown six-party facet inequalities\footnote{Our proof method has proved 1116 unpublished 6-party inequality candidates shared by Sergio Hernández-Cuenca in private communication, which are computationally intractable by the state-of-the-art greedy algorithm. We thank Sergio Hernández-Cuenca for sharing and allowing to use them.} that we proved using contraction map method, in \ref{ineq:new_6.1} and \ref{ineq:new_6.1052},
\begin{equation}
 \begin{split}
& 2S(ABC) + 2S(ABD) + S(ABE) + S(ABF) + S(ACD) + 2S(ACE) + S(ACF) + S(ADE) \\ & + S(ADF) + 2S(BCD) + S(CDE) + S(CEF) \geq 2S(AB) + S(AC) + S(AD) + S(AE) \\ &+ S(AF) + S(BC) + S(BD) + S(CD) + S(CE) + S(CF) + S(DE) + 2S(ABCD) \\ &+ S(ABCE) + S(ABDF) + S(ACDE) + S(ACEF).\label{ineq:new_6.1}
\end{split} \end{equation}

\begin{equation}
 \begin{split}
& S(ABC) + S(ABD) + S(ABE) + S(ABF) + S(ACD) + S(ACF) + S(ADE) + S(AEF) \\ &+ S(BCF) + S(BDF) + S(CDF) + S(ABCE) + S(BCEF) + S(CDEF) \geq \\ & 2S(AB) + S(AC) + S(AD) + S(AE) + S(BF) + S(CD) + S(CF) + S(DF) + S(EF) \\ &+ S(BCE) + S(ABCF) + S(ABDE) + S(BCDF) + S(ABCEF) + S(ACDEF)\label{ineq:new_6.1052}
\end{split} \end{equation}

\subsection{Aside: Unphysical $H_M$ Vertices}
As was first pointed out in \cite{Avis_2023}, certain vertices in $H_M$ are unphysical; for example, a vertex cannot be included in $A$ but excluded in $AB$. By nesting of minimal-cardinality min-cuts, such vertices simply do not exist (holographically, entanglement wedge nesting guarantees that such spacetime regions are empty). These vertices can formally be characterized in terms of intersecting anti-chains among the collections of subsystems they label (we refer the interested reader to \cite{Avis_2023} for more details on this classification).\footnote{A more holography-based approach to these unphysical vertices was pursued in \cite{Li_2022}, which we believe to be complementary to our main techniques.}

It is therefore a question as to whether these vertices must be mapped to $H_N$ and whether if they were mapped to $H_N$, they would result in over-stringent (and undesired) constraints for the contraction proof method. By analysis of the known inequalities, the answer to both of these questions appears to be no. Regarding the former, at the level of the proof methods, it is clearly unnecessary to include such vertices in $H_N$: the graph vertex they label does not exist, so no edges attached to it exist either, and thus there is no need to rearrange their contribution to left-hand-side cuts into contributions to right-hand-side cuts. In other words, a contraction map for $H_M$ without unphysical bitstrings included suffices to prove the corresponding inequality valid. In principle, it could be a logical possibility for a contraction map to exist for $H_M$ without unphysical bitstrings, but to not exist for the full $H_M$. This is the latter question, which we experimentally answer in the negative. For all known valid inequalities, contraction maps exist regardless of whether unphysical bitstrings are included. Something even stronger happens to hold: the inclusion of unphysical bitstrings turns out to not enforce the deterministic fixing of any single additional bit on the right-hand side. This suggests that the contraction conditions following from unphysical bitstrings are always strictly weaker than the rest, and thus redundant. We leave the proof of this statement for future work. Whether or not true, it is clear that at both fundamental and practical levels, removing unphysical vertices from $H_N$ is preferred.

It is an interesting question to ask if the number of such unphysical bitstrings decreases as the party number increases. While naively we have a way of analyzing this via the infinite family of cyclic holographic entanglement entropy inequalities, as an independent member of this family appears for each odd party number, the LHS terms in the cyclic family of interest all together form an intersecting anti-chain (every term has more than 1/2 of the terms and they all cross), which means any subset of them is also an intersecting anti-chain. In other words, there are no unphysical bitstrings for the cyclic family. However, we find that statistically, the number of unphysical bitstrings is significantly higher for the six-party inequalities than for the five-party ones, and establishing whether this trend continues to higher party numbers is a useful course of study.

We can make this more precise with the following estimate. The unphysicality of bitstrings on $H_M$ can be studied pairwise via the columns of the contraction map, as for every pair of columns either one is contained in the other, or the intersection is trivial. As these possibilities are the only ones that would lead to a bitstring being deemed unphysical, and as for each of these only one combination of two bits yields an unphysical outcome ($11$ for trivial intersection and $10$ for containment, WLOG), when an offending combination occurs a quarter of the naive bitstrings are rendered unphysical.

Consider a pair of columns defining $H_M$. Let one of them be an $a$-party column, and the other be an $b$-party column. We can take $a$ and $b$ to both be less than or equal to $N$, the party number of the inequality, by purification. There are $N \choose a$ ways different possibilities for the $n$-party column; WLOG we can specialize to a specific one. Then, ${N-a} \choose b$ out of $N \choose b$ choices for the $m$ party column have a trivial intersection with the $a$ party column. For $a, b\ll N$,\footnote{We note that, by symmetry, if $N-a,N-b\ll N$ the same result holds.} this is very close to all of them. Each of these results in only $3/4$ of the bitstrings having the possibility of remaining physical. This is, however, only the analysis for trivial intersection,\footnote{By symmetry the analysis for containment yields the same result.} and only for a single pair of columns. Once all columns with small $a,b$ are considered, the number of such pairs selected without replacement\footnote{The without replacement is for reasons of independence; the fraction changes a bit away from $3/4$ per pair if for example, the same $a$-party column has a trivial intersection with multiple $b$-party columns. In this case, where a single $a$-party column has a trivial intersection with $k-1$ $b$-party columns, the combined fraction of physical bitstrings retained is $1-\frac{1}{2^k}$.} gives the exponent that the $3/4$ is raised to, and will generically eliminate almost all naive bitstrings as unphysical.

Even if only one of $a,b\ll N$, and the other is as large as is allowed by symmetry to make the reduction as small as possible, e.g. $a=N/2$, if $b\ll N$ then the fraction of choices of the $b$-party column that will result in some unphysicality as above goes as $\frac{1}{2^b}$, which for small $b$ is nontrivial. It's only when $a,b\gg1$ and at least one of $a,b\sim N/2$ that the fraction of choices resulting in trivial intersections approaches zero.

\section{Complexity and Completeness}
\label{sec:complexity}

\subsection{Complexity}
Recall $H_M$ and $H_N$ are Hamming hypercubes on left and right-hand sides of an inequality, where we have $|H_M|=2^M=\mathcal{N}$. Both $M$ and $N$ are of the order of $\log{\mathcal{N}}$. If one were to assign all the RHS vertices with the same entry, one would have to fill up $M$ values $2^N$ times, suggesting that the most trivial toy map (not necessarily a contraction) will have a complexity $\mathcal{O}(\mathcal{N}\log{\mathcal{N}})$. Any realistic contraction map has a computational complexity greater than this.

The computational complexity of applying methods 1 and 2 are upper bounded by $\mathcal{O}(\mathcal{N}^3\log{\mathcal{N}})$ and $\mathcal{O}(\mathcal{N}^2\log{\mathcal{N}})$ respectively. One executes them sequentially starting from initial data to fixing a certain number of bits, where one has to make a choice to fix any more bits. Note that calculation of this upper bound assumes the distance between two bitstrings in LHS is maximal $(M)$ and every pair of bitstrings are Hamming-distance preserving from $H_M$ to $H_N$, both of which are over-estimations. We call the complexity associated with making choices a query complexity $\mathcal{Q}$.
It is empirically observed that $\mathcal{Q}\sim\mathcal{O(\mathcal{N})}$.

For most inequalities, methods 1 and 2 suffice to fix as many bits as including method 3 would, i.e, method 3 doesn't fix any additional bits (for some inequalities, method 3 fix additional bits). As the methods 1 and 2 are executed after making every bit choice, the estimated upper bound of complexity to generate the entire contraction map is $\mathcal{O}(\mathcal{N}^4\log{\mathcal{N}})$.

To stubbornly apply method 3, one rarely fixes a large number of bits while increasing the computational complexity greatly. One first has to find all pairs of Hamming distance preserving vertices fixed by methods 1 and 2, whose complexity scales as $\mathcal{O}(\mathcal{N}^2 \log{\mathcal{N}})$. By our algorithmic implementation of a single run of method 3 using two unique Hamming distance preserving pairs of vertices $(x_1, x_2)$ and $(x_3, x_4)$, the complexity is bounded above by $\mathcal{O}(\mathcal{N}^4 (\log{\mathcal{N}})^2)$. Running through all such pairs, the complexity scales as $\mathcal{O}(\mathcal{N}^8 (\log{\mathcal{N}})^2)$. Note that this again an inflated upper bound as we are assuming that the number of Hamming distance preserving pairs scale as $\mathcal{O}(\mathcal{N}^2)$ (while empirically it has been observed to be of the order of $\mathcal{O}(\mathcal{N})$). We apply method 3 only once after the first deterministic run and rely on the power of methods 1 and 2 once the algorithm starts making choices.\footnote{In the more recent implementations, we have relied entirely on methods 1 and 2. As long as the \textit{few} deterministic \textit{unfixed} bits are fixed by some Hamming preserving path resulting from prior free choices, it is not an issue.}

The above routines run until it fixes no new bits. The number of choices that must be made can be understood as the number of queries needed to complete a contraction proof. While there is no theoretical bound (short of the total number of unfixed bits initially) for the number of queries, in practice the number of queries required is significantly lower, and does not exceed $\mathcal{O(N)}$ choices for any given inequality to date. This number is also not optimized, and so in principle could be even lower, given a more optimized 0/1 selection strategy when choices are made.\footnote{We have implemented a version of our code using only the physical bitstrings and it reduces the run-time by a huge factor whenever the number of physical bitstrings $\mathcal{N}_{phy}$ is less than the total number of vertices of the hypercube $\mathcal{N}$ by a significant factor.} A pseudocode summarizing the working of the algorithm is given in Algorithm 1.

\subsection{Completeness}

As the constraints of the contraction map are built out, sub-graphs of the final graph are constructed step by step, with unfixed bits corresponding to regions for which not all adjacency conditions via wormhole are known. Such unknowns correspond to degrees of freedom remaining in the contraction map.

This leads to a path towards a proof of completeness of the contraction map method. Because every fixing of the bitstring reduces the remaining degrees of freedom of the graph, and thus of a wormhole geometry, one can ask what happens when the contraction map fails. If the contraction map is implemented bit by bit, this must correspond to a situation where a single bit is forced to be both $1$ and $0$, which it cannot be. What this would mean is that two different graph extensions of the graph corresponding to the bit just prior to the contradictory bit are generated. These graphs will serve as the counterexample graphs, where their cuts will fail the contraction map, which corresponds to the cutting and pasting strategy for all holographic entanglement entropy/graph cut inequalities. Something that is still lacking is a constructive algorithm for generating the graph that serves as the counterexample to a particular false inequality. While our arguments here show that such a graph must exist, it does not explicitly generate such a graph from knowledge of the inequality alone. We leave the development of this constructive algorithm for future work.

Also, there is an argument that the methods we used and straightforward generalizations thereof encompass all possible deterministic fixings, specifically via combinations of Hamming-preserving paths. The argument proceeds by contradiction, by assuming that there exists a rigid sub-mapping between hypercubes that is not a Hamming-preserving path (or an off-by-one by previous parity arguments). However, such a mapping could always be reduced to a Hamming-preserving path by modification of the LHS to take up the slack. Therefore, such a sub-mapping would not be rigid, as a rigid mapping would not allow for such slack by definition, as such slack would always be associated with a non-forced choice. Therefore, the only rigid maps permitted are Hamming distance preserving maps, and therefore our deterministic mapping methods, appropriately generalized to $n$ Hamming distance preserving maps, is complete. We leave the formal proof of this statement for future work.

\section{Discussion}
\label{sec:discussion}

It is a tantalizing direction to consider what these deterministic methods would mean for the bulk spacetime in AdS/CFT directly. The Hamming-preserving maps correspond to isometries between certain sub-graphs of $H_M$ to certain sub-graphs of $H_N$. Because these hypercubes are representations of the spacetime itself, it suggests isometries between sequences of certain bulk regions separated by Ryu-Takayanagi surfaces associated with portions of the LHS and RHS of the entropy inequalities. This has the potential to give novel constraints regarding the metric rigidity of AdS/CFT, possibly connecting to bulk metric reconstruction \cite{Bao_2019}.

A major limitation of this work is that it gives no aid in generating candidate inequalities for holographic entanglement entropy. Therefore, a complementary future direction to this work is to find a way of efficiently generating candidate inequalities or to generate automatically correct inequalities that are true, but whose tightness to the cone must be checked. To do this, one can potentially study the problem of finding all contraction maps, or at least some families of maps that scale with $N$ and $M$ between given hypercubes $M$ and $N$ with no single-party constraints. Once such families have been found, the single-party bitstrings of $H_M$ and $H_N$ can be assigned retroactively, which then specifies the entropies that appear in the candidate inequality. Finding such families of contraction maps appears to be a difficult combinatorial problem, but a clever solution thereof would immediately generate novel families of entanglement entropy inequalities.

\acknowledgments
We thank Sergio Hernandez-Cuenca for initial collaboration. We thank Scott Aaronson, Bartek Czech, Miles Cranmer and Jason Pollack for discussions during the writing of this paper. We would also like to thank Xi Dong, Gabriel Trevi\~no, Michael Walter, and Wayne Weng for earlier discussions on similar ideas. N.B. is supported by the Computational Science Initiative at Brookhaven National Laboratory, Northeastern University, and by the U.S. Department of Energy QuantISED Quantum Telescope award. J.N. is supported by the graduate assistantship at Northeastern University.

\begin{algorithm}
    \caption{Contraction map pseudocode}
    \label{alg:pseudocode}
    \begin{algorithmic}[1]
        \Procedure{Initialization}{}
            \State Read the inequality, store the terms and coefficients.
            \State Generate a hypercube $H_M$ canonically, assign LHS Hamming weights.
            \State Expand the RHS coefficients to $N$ columns and generate blank images in $H_N$.
            \State Incorporate the initial conditions in the candidate map.
        \EndProcedure
        \State
        \Procedure{Deterministic Fixing}{}
        \While{No more deterministic fixing possible,}
            \State Run Method 1.
            \State Run Method 2.
        \EndWhile
        \State Run Method 3. (Optional).
        \If{All bits fixed}
                    \State Check contraction map consistency.
                    \If{Success}
                        \State \textbf{Proof completed. Contraction map found.}
                    \Else
                    \State \textbf{Proof failed. Contraction map doesn't exist.}
                    \EndIf
        \Else
            \State Check partial contraction map consistency.
            \If{Failure}
            \State \textbf{Proof failed. Contraction map doesn't exist.}
        \EndIf
        \EndIf
        \EndProcedure
        \State
        
        \Procedure{Bit-wise Choices}{}
        \State Set iterator $k$ value at 0 and choose maximum number of iterations $k_{max}$.
        \While{$\exists$ unfixed bit \textbf{and} $k<k_{max}$}
            \State Iterator $k+=1$
            \State Locate the top-left-most unfixed bit $j$ in a partially fixed string $i$.
            \While{Bitstring $i$ is unfixed}
                \State Choose $0$.
                \State Run Method 2 on string $i$ alone.
            \EndWhile
            \State \textbf{Run Procedure} \texttt{Deterministic fixing} (check consistency only in the end)
        \EndWhile
        \EndProcedure
    \State
    
    \Procedure{String and Bit Choices}{}
    %\For{$i \gets 1$ \KwTo $2^M-1$}
    \For{$i = 0;\ i < 2^M-1;\ i++$}
    \If{Bitstring $i$ is fully unfixed}
            \State \textbf{Choose} string.
        \Else
            \State \textbf{Run Procedure} \texttt{Bit-wise Choices} to string $i$ alone.
        \EndIf
    \EndFor
    \EndProcedure
    \end{algorithmic}
\end{algorithm}

\newpage
\appendices
\section{More examples of known holographic entropy inequalities}\label{app:a}
\subsection{Six-party inequalities}
The following six-party inequalities are adapted from \cite{hubeny:2023hei} (see table \ref{tab:6party_stats} for a summary of their contractions maps),

\begin{equation}
\begin{split}
& S(ABC) + S(ABD) + S(ABE) + S(ACD) + S(ACF) + S(BCEF) \geq \\ & S(AB) + S(AC) + S(AD) + S(BE) + S(CF) + S(ABCD) + S(ABCEF)\label{6.1}
\end{split} \end{equation}
\begin{equation}
 \begin{split}
& S(BC) + S(ABD) + S(ABE) + S(ACD) + S(ADF) + S(BDF) + S(BCDE) \geq \\ & S(A) + S(B) + S(C) + S(AD) + S(BE) + S(DF) + S(BCD) + S(ABDF) + S(ABCDE)\label{6.2}
\end{split} \end{equation}
\begin{equation}
 \begin{split}
& 2S(ABC) + S(ABD) + S(ABE) + S(ACD) + S(ACF) + S(BCE) + S(BCF) \geq \\ & S(AB) + S(AC) + S(AD) + S(BC) + S(BE) + S(CF) + S(ABCD) + S(ABCE) + S(ABCF)\label{6.3}
\end{split} \end{equation}
\begin{equation}
 \begin{split}
& S(ABC) + S(ABD) + S(ABE) + S(ACD) + S(ACF) + 2S(BCD) + S(BCEF) \geq \\ & 2S(A) + S(BC) + S(BD) + S(BE) + S(CD) + S(CF) + 2S(ABCD) + S(ABCEF)\label{6.4}
\end{split} \end{equation}
\begin{equation}
 \begin{split}
& S(ABC) + S(ACD) + S(ACE) + S(ADF) + S(BDE) + 2S(CDE) + S(BCDF) \geq S(A)  \\ & + S(B) + S(AC) + S(CD) + S(CE) + S(DE) + S(DF) + S(ACDE) + S(BCDE) + S(ABCDF)\label{6.5}
\end{split} \end{equation}
\begin{equation}
 \begin{split}
& S(ABC) + S(ABD) + S(ABE) + S(ABF) + S(ACD) + S(ACE) + S(BCE) + S(BEF) \geq \\ & S(AB) + S(AC) + S(AD) + S(BE) + S(BF) + S(CE) + S(ABCD) + S(ABCE) + S(ABEF)\label{6.6}
\end{split} \end{equation}
\begin{equation}
 \begin{split}
& S(BC) + S(ABD) + S(ACD) + S(AEF) + S(BCD) + S(BDE) + S(CDE) + S(DEF) \geq \\ & S(A) + S(B) + S(C) + S(BD) + S(CD) + S(DE) + S(EF) + S(ABCD) + S(ADEF) + S(BCDE)\label{6.7}
\end{split} \end{equation}
\begin{equation}
 \begin{split}
& S(ABC) + S(ABD) + S(ACE) + S(ACF) + S(BCE) + S(ABDE) + S(ACDE) + S(BCDF) \geq \\ & S(A) + S(B) + S(AC) + S(BD) + S(CE) + S(CF) + S(ADE) + S(ABCE) \\ & + S(ABCDE) + S(ABCDF)\label{6.8}
\end{split} \end{equation}
\begin{equation}
 \begin{split}
& S(ABC) + S(ABD) + S(ACE) + S(AEF) + S(BCE) + S(BDE) + S(DEF) + S(ACDE) \geq \\ & S(A) + S(B) + S(AC) + S(BD) + S(CE) + S(DE) + S(EF) + S(ABCE) + S(ADEF) + S(ABCDE)\label{6.9}
\end{split} \end{equation}
\begin{equation}
 \begin{split}
& S(ABC) + S(ACD) + S(ACE) + S(AEF) + S(BDF) + S(CDE) + S(CDF) + S(BCEF) \geq \\ & S(A) + S(B) + S(AC) + S(CD) + S(CE) + S(DF) + S(EF) + S(ACDE) + S(BCDF) + S(ABCEF)\label{6.10}
\end{split} \end{equation}

\begin{longtable}{@{}llllll@{}}
%\centering
%\begin{tabular}{@{}llllll@{}}
\toprule
\textbf{Inequality} & \textbf{M} & \textbf{N} & \textbf{\begin{tabular}[c]{@{}l@{}}\% bits fixed by\\ methods 1 \& 2\\ until first choice\end{tabular}} & \textbf{\begin{tabular}[c]{@{}l@{}}No. of\\ bit\\ choices\end{tabular}} & \textbf{\begin{tabular}[c]{@{}l@{}}No. of\\ string\\ choices\end{tabular}} \\ \midrule
\ref{6.1} & 6 & 7 & 91.74 & 14 & 0 \\
\ref{6.2} & 7 & 9 & 81.94 & 35 & 2 \\
\ref{6.3} & 7 & 9 & 93.75 & 18 & 0 \\
\ref{6.4} & 7 & 10 & 86.01 & 37 & 1 \\
\ref{6.5} & 7 & 10 & 91.79 & 22 & 1 \\
\ref{6.6} & 8 & 9 & 86.54 & 32 & 0 \\
\ref{6.7} & 8 & 10 & 72.42 & 53 & 4 \\
\ref{6.8} & 8 & 10 & 78.78 & 77 & 5 \\
\ref{6.9} & 8 & 10 & 79.64 & 55 & 1 \\
\ref{6.10} & 8 & 10 & 85.0 & 37 & 0 \\ \bottomrule
%\end{tabular}
\caption{Summary of 6-party contraction maps with LHS and RHS of length M and N respectively. This table shows the percentage of bits fixed by deterministic run from methods 1 and 2 before making the first choices and finally notes the number of choices made to generate a contraction map.}
\label{tab:6party_stats}
\end{longtable}

\subsection{Seven-party inequalities}
We will use the two seven-party inequalities from \cite{Bao_2015} and \cite{Czech_2023}, whose contraction maps we summarize in table \ref{tab:7party_stats},
\begin{equation}
\begin{aligned}
  & S(ABCD)+ S(ABCG)+ S(ABFG)+ S(AEGF)+ S(BCDE)+ S(CDEF)+ S(DEFG) \geq
  \\ & S(ABC)+ S(ABG)+ S(AFG)+ S(BCD)+ S(CDE)+ S(DEF)+ S(EFG)+ S(ABCDEFG),
  \label{7.1}
\end{aligned}
\end{equation}

\begin{equation}
\begin{aligned}
  & S(ABDE)+S(ABDF)+S(ABEG)+S(ACDE)+S(ACDF)+S(ACEG)+S(ADEF)+S(ADEG) \\
  & +S(BCDE)+S(BCDF)+S(BCEG)+S(BDEF)+S(BDEG)+S(CDEF)+S(CDEG) \geq \\
  & S(ABC)+S(ADE)+S(ADF)+S(AEG)+S(BDE)+S(BDF)+S(BEG)+S(CDE)+S(CDF) \\
  & +S(CEG)+S(ABDEF)+S(ABDEG)+S(ACDEF)+S(ACDEG)+S(BCDEF)+S(BCDEG).
  \label{7.2}
\end{aligned}
\end{equation}

% Please add the following required packages to your document preamble:
% \usepackage{booktabs}
\begin{table}[h!]
\centering
\begin{tabular}{@{}llllll@{}}
\toprule
\textbf{Inequality} & \textbf{M} & \textbf{N} & \textbf{\begin{tabular}[c]{@{}l@{}}\% bits fixed by\\ methods 1 \& 2\\ until first choice\end{tabular}} & \textbf{\begin{tabular}[c]{@{}l@{}}No. of\\ bit\\ choices\end{tabular}} & \textbf{\begin{tabular}[c]{@{}l@{}}No. of\\ string\\ choices\end{tabular}} \\ \midrule
\ref{7.1} & 7 & 8 & 92.48 & 10 & 0 \\
\ref{7.2} & 15 & 16 & 89.41 & 705 & 108 \\ \bottomrule
\end{tabular}
\caption{Summary of 7-party contraction maps in the format of table \ref{tab:6party_stats}.}
\label{tab:7party_stats}
\end{table}

% Bibliography

%% [A] Recommended: using JHEP.bst file
\bibliographystyle{JHEP}
\bibliography{main.bib}

\end{document}